\protected\def\tikz@nonactivecolon{\ifmmode\mathrel{\mathop\ordinarycolon}\else:\fi} 
\newcommand{\eq}[1]{\hyperref[eq:#1]{(\ref*{eq:#1})}}
\renewcommand{\sec}[1]{\hyperref[sec:#1]{Section~\ref*{sec:#1}}}
\newcommand{\app}[1]{\hyperref[app:#1]{Appendix~\ref*{app:#1}}}
\newcommand{\thm}[1]{\hyperref[thm:#1]{Theorem~\ref*{thm:#1}}}
\newcommand{\lem}[1]{\hyperref[lem:#1]{Lemma~\ref*{lem:#1}}}
\newcommand{\defn}[1]{\hyperref[defn:#1]{Definition~\ref*{defn:#1}}}
\newcommand{\ex}[1]{\hyperref[ex:#1]{Example~\ref*{ex:#1}}}
\newcommand{\fct}[1]{\hyperref[fct:#1]{Fact~\ref*{fct:#1}}}
\newcommand{\fig}[1]{\hyperref[fig:#1]{Figure~\ref*{fig:#1}}}
\newcommand{\figs}[1]{\hyperref[fig:#1]{Figures~\ref*{fig:#1}}}
\theoremstyle{plain}
\newtheorem{lemma}{Lemma}
\newcommand{\sket}[1]{|{#1}\rangle}
\newcommand{\sbraket}[2]{\langle{#1}|{#2}\rangle}
\newcommand{\CC}{\mathbb{C}}
\newcommand{\II}{\mathbb{I}}
\newcommand{\QQ}{\mathbb{Q}}
\newcommand{\ZZ}{\mathbb{Z}}
\newcommand{\posint}{\ZZ^+}
\renewcommand{\sc}{\mathrm{sc}}
\newcommand{\bc}{\mathrm{bc}}
\newcommand{\ph}{\mathrm{ph}}
\DeclareMathOperator{\spn}{span}
\title{Momentum switches}
\author{Andrew M. Childs$^{1,2}$} \email{amchilds@uwaterloo.ca}
\author{David Gosset$^{1,2}$} \email{dngosset@gmail.com}
\author{Daniel Nagaj$^{3,4}$} \email{daniel.nagaj@univie.ac.at}
\author{Mouktik Raha$^{2,5}$} \email{mouktikraha@gmail.com}
\author{Zak Webb$^{2,6}$} \email{zakwwebb@gmail.com}
\address{$^1$ Department of Combinatorics \& Optimization, University of Waterloo}
\address{$^2$ Institute for Quantum Computing, University of Waterloo}
\address{$^3$ Faculty of Physics, University of Vienna}
\address{$^4$ Institute of Physics, Slovak Academy of Sciences}
\address{$^5$ Department of Physics and Meteorology, Indian Institute of Technology Kharagpur}
\address{$^6$ Department of Physics \& Astronomy, University of Waterloo}
\date{}
\begin{document}

\maketitle 

\begin{abstract}
Certain continuous-time quantum walks can be viewed as scattering processes. These  processes can perform quantum computations, but it is challenging to design graphs with desired scattering behavior.  In this paper, we study and construct momentum switches, graphs that route particles depending on their momenta.  We also give an example where there is no exact momentum switch, although we construct an arbitrarily good approximation.
\end{abstract}

\section{Introduction}
\label{sec:intro}

Quantum walk is a powerful tool for quantum computation.  In particular, the concept of scattering on graphs has been used to develop algorithms \cite{FG98,FGG07} and to establish universality of models of computation based on quantum walk \cite{Chi09,CGW13}.

In the scattering framework, we consider an infinite graph obtained by attaching semi-infinite paths to some of the vertices of a finite graph $\hat{G}$, as shown in \fig{basic_scattering}. A particle is initialized in a state that moves toward $\hat{G}$ on one of the semi-infinite paths. After some time the particle has scattered; it moves away from $\hat{G}$ and, in general, has some outgoing amplitude on each of the semi-infinite paths.  By choosing the graph carefully, such a scattering process can be designed to perform a quantum computation.

A discrete version of scattering theory can be used to compute the amplitude scattered into each path. The theory of scattering on graphs was introduced by Farhi and Gutmann in the setting with two semi-infinite paths \cite{FG98}; Childs presented an application with an arbitrary number of semi-infinite paths \cite{Chi09}.  Other work has described further basic properties of scattering on graphs \cite{VB09}, classified the scattering properties of some small graphs using a computer search \cite{BUF11}, established a discrete analog of Levinson's Theorem \cite{CS11,CG12}, and proved completeness of the scattering and bound states \cite{CG12}.

While it is straightforward to compute the scattering behavior of a given graph, it is considerably more difficult to design a graph that implements some desired scattering behavior.  Our goal in this paper is to develop tools for constructing scattering gadgets.  We hope that these ideas will ultimately prove useful in the design of scattering algorithms. 

We focus on a scattering gadget called a \emph{momentum switch}.  A momentum switch has three terminals (i.e., has the form of \fig{basic_scattering} with $N=3$) and has special scattering properties for (at least) two momenta $k$ and $p$.  A particle with momentum $k$ transmits perfectly between paths $1$ and $2$, whereas a particle with momentum $p$ transmits perfectly between paths $1$ and $3$. Thus a momentum switch routes a particle in a direction that depends on its momentum, as shown in \fig{qual_momentum_switch}.


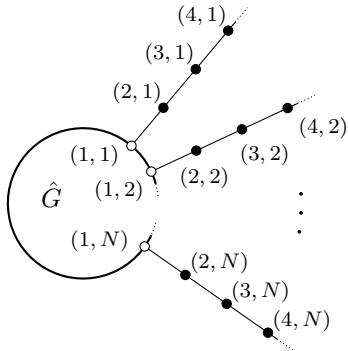
\begin{figure}
\begin{tikzpicture}[
  label distance=-5.5pt,
  thin,
  vertex/.style={circle,draw=black,fill=black,inner sep=1.25pt,
    minimum size =0mm},
  attach/.style={circle,draw=black,fill=white,inner sep=1.25pt,
    minimum size =0mm},
  dots/.style={circle,fill=black,inner sep=.5pt,
    minimum size= 0pt},
  every text node part/.style={font=\footnotesize}]

  \node at (.15, .63) [rectangle,fill=white] {$(1,1)$};
  \node at (.48, .17) [rectangle,fill=white,inner sep=0pt] {$(1,2)$};
  \node at (.22,-.5) [rectangle,fill=white] {$(1,N)$};

  \draw[thick] (15:1) arc (15:335:1);
  \draw[densely dotted] (15:1)  arc (15:5:1);
  \draw[densely dotted] (-15:1)  arc (-15:-25:1);

  \node at (-.42,.12) [rectangle] {\normalsize${\hat{G}}$};

  \foreach \x  in {50, 25,-35}{
    \draw (\x:1cm) -- (\x:3.15cm);
    \draw (\x:2.9cm) -- (\x:3.4cm) [densely dotted];
  }
  
  \node at (50:1)[attach]{};
  \node at (50:1.66)[vertex,label=150:{$(2,1)$}]{};
  \node at (50:2.33)[vertex,label=150:{$(3,1)$}]{};
  \node at (50:3)[vertex,label=150:{$(4,1)$}]{};

  \node at (25:1)[attach]{};
  \node at (25:1.66)[vertex]{};
  \node at (25:2.33)[vertex]{};
  \node at (25:3)[vertex]{};
  
  \node at (12:1.65) {$(2,2)$};
  \node at (15:2.5) {$(3,2)$};
  \node at (17.5:3.35) {$(4,2)$};

  \node at (-35:1)[attach]{};
  \node at (-35:1.66)[vertex,label=60:{$(2,N)$}]{};
  \node at (-35:2.33)[vertex,label=60:{$(3,N)$}]{};
  \node at (-35:3)[vertex,label=60:{$(4,N)$}]{};

  \node at (-2.5:2.9)[dots] {};
  \node at (2.5:2.9)[dots] {};
  \node at (-7.5:2.9)[dots] {};
\end{tikzpicture}
\caption{A finite graph $\hat{G}$ with $N$ semi-infinite attached. The open circles are \emph{terminals}, vertices of $\hat{G}$ to which semi-infinite paths are attached. The internal vertices of $\hat{G}$ are not shown.}
\label{fig:basic_scattering}
\end{figure}

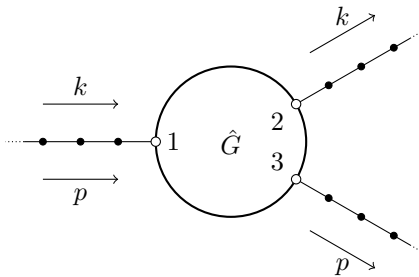
\begin{figure}
\begin{tikzpicture}
  [ thin,
    inner/.style={circle,draw=black!100,fill=black!100,inner sep = .9 pt},
  attach/.style={circle,draw=black,fill=white,inner sep=1.25pt,
    minimum size =0mm}]

  \draw (0,0) [thick] circle (1cm);
\node at (0,0) {$\hat{G}$};
\node at (180:1.1) [label=right:{$1$}] {};
\node at (30:1.1) [label=200:{$2$}]{};
\node at (330:1.1) [label=160:{$3$}]{};

\foreach \t in {180,30,330}{
  \draw (\t:1) -- (\t:2.75);
  \draw[densely dotted] (\t:2.75) -- (\t:3);
  \node[attach] at (\t:1) {}; 
  \foreach \x in {1.5,2,2.5}{
    \node[inner] at (\t:\x) {};
}}

\begin{scope}[xshift = -2.5cm,yshift = .5cm]
  \draw[->] (0,0) -- (1,0) node [midway,above] {$k$};
\end{scope}

\begin{scope}[rotate=30,xshift = 1.5cm, yshift = .5cm]
  \draw[->] (0,0) -- (1,0) node [midway,above] {$k$};
\end{scope}

\begin{scope}[xshift = -2.5cm,yshift=-.5cm]
  \draw[->] (0,0) -- (1,0) node [midway,below] {$p$};
\end{scope}

\begin{scope}[rotate=-30,xshift = 1.5cm,yshift=-.5cm]
  \draw[->] (0,0) -- (1,0) node [midway,below] {$p$};
\end{scope}

\end{tikzpicture}
\caption{A momentum switch.  A particle moving toward vertex 1 with momentum $k$ transmits perfectly to the upper path (through vertex 2), while a particle with momentum $p$ transmits to the lower path (through vertex 3).}
\label{fig:qual_momentum_switch}
\end{figure}

A switch between momenta $-\frac{\pi}{2}$ and $-\frac{\pi}{4}$ was used as a tool in the multi-particle quantum walk universality construction \cite{CGW13}. In this paper, we construct switches between other pairs of momenta. To achieve this, we consider a closely related type of graph called a reflection/transmission (R/T) gadget. An R/T gadget is a graph with two terminals (as in \fig{basic_scattering} with $N=2$) such that some momenta transmit perfectly between the two paths, whereas other momenta perfectly reflect. The momentum switches we construct in this paper are built by combining R/T gadgets in a prescribed way.

We also show that it is not always possible to construct a momentum switch.  In particular, we prove that there is no switch between momenta $-\frac{\pi}{4}$ and $-\frac{3\pi}{4}$.  (These two particular momenta are relevant not only because they provide a concrete limitation on the construction of momentum switches, but because they both support the universal gates constructed in \cite{Chi09}, so a momentum switch between them would simplify a multi-particle universality construction along the lines of \cite{CGW13}.)  Nevertheless, we exhibit graphs that approximate a momentum switch at these two momenta to arbitrarily high precision. 

The remainder of this paper is organized as follows. In \sec{scat_theory} we review scattering theory on graphs. Then in \sec{mswitch} we define momentum switches and R/T gadgets. In \sec{reflection} we give some explicit constructions of R/T gadgets, and in \sec{switch} we describe how to construct a momentum switch starting from a specific type of R/T gadget.  Using this construction, we obtain a large class of momentum switches. In \sec{impossibility} we prove that there is no perfect momentum switch between $-\frac{\pi}{4}$ and $-\frac{3\pi}{4}$, and in \sec{approx_switch} we describe approximate momentum switches between these momenta. We conclude in \sec{conc} with a discussion of the results and some directions for future work.

\section{Continuous-time quantum walk and scattering theory}
\label{sec:scat_theory}

The continuous-time quantum walk on an unweighted graph $G$ with vertex set $V(G)$ lives in the Hilbert space $\spn\{|v\rangle\colon v\in V(G)\}$ and is generated by the time-independent Hamiltonian equal to the adjacency matrix of the graph.

First consider the case where $G$ is an infinite path, so the Hamiltonian is
\[
H=\sum_{x\in\ZZ} \bigl(|x\rangle\langle x+1|+|x+1\rangle\langle x|\bigr).
\]
As with a free particle in one dimension (in the continuum), this Hamiltonian does not have any normalized eigenvectors. However, if we allow unnormalized states, then we can solve the eigenvalue equation and obtain eigenvectors $|\tilde{k}\rangle$ for each $k\in [-\pi,\pi)$, defined by $\langle x |\tilde{k}\rangle = e^{-ikx}$. These states satisfy $\langle x|H|\tilde{k}\rangle= E(k)\langle x|\tilde{k}\rangle$, where
\[
E(k)=2\cos(k).
\]
We call these momentum states; the number $k\in [-\pi,\pi)$ is the corresponding momentum.  A wave packet consisting of momenta near $k$ moves with speed $|\frac{dE}{dk}|=|2\sin(k)|$.

Now consider the more general setup shown in \fig{basic_scattering}. In this setting one can prepare a particle with momentum near $k$ on the semi-infinite path labeled $j\in [N] := \{1,\ldots,N\}$; if $k \in (-\pi,0)$, the particle moves toward the finite graph $\hat{G}$. After some time the particle will be in a superposition of states that move away from $\hat G$ on the semi-infinite paths. Such processes are described by the incoming scattering eigenstates 
\[
\{\sket{\sc_j (k)} \colon k\in(-\pi,0), j\in [N]\}
\]
which satisfy $H\sket{\sc_j (k)}=2\cos(k)\sket{\sc_j (k)}$. Labeling vertices on the paths by $(x,j')$ (where $j' \in [N]$ labels the path and $x \in \posint := \{1,2,\ldots\}$ labels the location on the path), these states have the form
\begin{equation}
  \langle x,j' \sket{\sc_j (k)} = \delta_{j'\!,j} e^{-ikx} + S_{j'\!,j}(k) e^{ikx},
\label{eq:scat_states}
\end{equation}
where $S(k)$ is a unitary matrix called the S-matrix.  Since $H$ is real, $S(k)$ is also symmetric, as can be seen from equation (2.8) of \cite{CG12}.  The matrix element $S_{j'\!,j}(k)$ can be interpreted as the amplitude for a wave packet with momentum $k$ to scatter from the $j$th semi-infinite path to the $j^{\prime}$th. 

In this paper we are only concerned with the scattering eigenstates of $H$, since they are sufficient to understand the scattering dynamics.  However, $H$ can also have other eigenvectors; a detailed description can be found in reference \cite{CG12}.

Equation \eq{scat_states} only specifies the form of the scattering state $\sket{\sc_j (k)}$ on the semi-infinite paths. To determine the state, one must compute the S-matrix at momentum $k$ as well as the amplitudes $\langle v\sket{\sc_j (k)}$ for vertices $v\in\hat{G}$. These quantities can be computed from $\hat{G}$ using the eigenvalue equation $H\sket{\sc_j (k)}=2\cos(k)\sket{\sc_j (k)}$.  This condition gives $|V(\hat G)|$ linear equations that specify the $N$ entries in the $j$th row of the S-matrix and the $|V(\hat G)|-N$ amplitudes at the internal vertices (those to which semi-infinite paths are not attached).  Reference \cite{CG12} gives more details on computing the S-matrix.

\section{Momentum switches and reflection/transmission gadgets}
\label{sec:mswitch}

As discussed in \sec{intro}, a momentum switch is a special type of gadget that can be used to route a particle depending on its momentum. This property is naturally described in terms of the S-matrix. 

For example, \fig{mswitch} shows the momentum switch used in reference \cite{CGW13}. The S-matrix of this graph has a special form at momenta $-\frac{\pi}{4}$ and $-\frac{\pi}{2}$:
\begin{equation}
  S_{\mathrm{switch}}(-\tfrac{\pi}{4}) = \begin{pmatrix} 0 & 0 & e^{-i\pi/4}\\
    0 & -1 & 0\\
    e^{-i\pi/4} & 0 & 0\end{pmatrix}\qquad
  S_{\mathrm{switch}}(-\tfrac{\pi}{2}) = \begin{pmatrix}0 & -1 &0\\
    -1 & 0 & 0\\
    0 & 0 & 1\end{pmatrix}.
\label{eq:switch_matrices}
\end{equation} 
This equation says that a particle with momentum $-\frac{\pi}{4}$ traveling towards the graph along path $1$ transmits perfectly to path $3$ (i.e., the amplitude for this process has unit magnitude), whereas a particle with momentum $-\frac{\pi}{2}$ traveling along path $1$ transmits perfectly to path $2$. In other words, this graph is a momentum switch between momenta $-\frac{\pi}{4}$ and $-\frac{\pi}{2}$. 

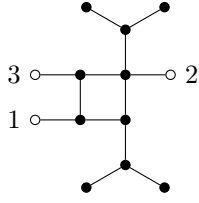
\begin{figure}
\centering
\begin{tikzpicture}
  [ scale = 0.6,
    thin,
    inner/.style={circle,draw=black!100,fill=black!100,inner sep=1.25pt},
    attach/.style={circle,draw=black!100,fill=black!0,thin,inner sep=1.25pt},
    sin/.style={line width=.7pt},
    doub/.style={line width=2.1pt},
    trip/.style={draw=white,line width=.7pt}]
    \node at (0,0){};
\begin{scope}[yshift=1.8cm]
  \node (2)  at (0,0) [attach,label=left:1] {};
  \node (1)  at (0,1) [attach,label=left:3] {};
  \node (3)  at (3,1) [attach,label=right:2] {};
  \node (4)  at (1,0) [inner]  {};
  \node (5)  at (2,0) [inner]  {};
  \node (6)  at (1,1) [inner]  {};
  \node (7)  at (2,1) [inner]  {};
  \node (8)  at (2,2) [inner]  {};
  \node (9)  at (2.866,2.5)  [inner] {};
  \node (10) at (1.134,2.5)  [inner] {};
  \node (11) at (2,-1)       [inner] {};
  \node (12) at (2.866,-1.5) [inner] {};
  \node (13) at (1.134,-1.5) [inner] {};

  \draw (2) to (4);
  \draw (4) to (5);
  \draw (3) to (7);
  \draw (1) to (6);
  \draw (6) to (4);
  \draw (6) to (7);
  \draw (7) to (5);
  \draw (7) to (8);
  \draw (8) to (9);
  \draw (8) to (10);
  \draw (11) to (5);
  \draw (11) to (12);
  \draw (11) to (13);

  \end{scope}
\end{tikzpicture}
\caption{The momentum switch from \cite{CGW13}.}
\label{fig:mswitch}
\end{figure}

Generalizing this example, a finite graph $\hat{G}$ with three terminals (labeled $1,2,3$) is a momentum switch between two (disjoint) sets of momenta $\mathcal{D},\mathcal{D}' \subset  (-\pi,0)$ if its S-matrix has perfect transmission from terminal $1$ to terminal $2$ at each momentum $k\in \mathcal{D}$ and perfect transmission from terminal $1$ to terminal $3$ at each momentum $p\in \mathcal{D}'$ (i.e., $|S_{1,2}(k)| = |S_{1,3}(p)|=1$ for $k\in \mathcal{D}$ and $p\in \mathcal{D}'$).  See \fig{qual_momentum_switch} for an illustration.

Momentum switches are closely related to another class of graphs that we call reflection/transmission (R/T) gadgets. An R/T gadget is a finite graph with two terminals. In addition, there exist two sets of momenta $\mathcal{R},\mathcal{T}\subset (-\pi,0)$ such that the gadget perfectly reflects (from both terminals) at all $k\in \mathcal{R}$ and perfectly transmits (between the two terminals) at all $p\in \mathcal{T}$ (i.e., $|S_{1,1}(k)| = |S_{1,2}(p)| = 1$ for $k\in \mathcal{R}$ and $p\in \mathcal{T}$).

There is a simple connection between R/T gadgets and momentum switches.  While a momentum switch has three terminals and an R/T gadget only has two, we now show that if we downgrade one of the terminals of a momentum switch to an internal vertex, the switch becomes an R/T gadget (between the momenta it separated). 

Let $\hat{G}$ be a momentum switch and fix $k\in \mathcal{D}$ and $p\in \mathcal{D}'$. The S-matrix takes the form
\[
  S_{\mathrm{switch}}(k) = \begin{pmatrix} 0 &T & 0\\
  T & 0 & 0\\
  0 & 0 & R\end{pmatrix} \qquad
  S_{\mathrm{switch}}(p) = \begin{pmatrix} 0 & 0 & T'\\
  	0 & R' & 0\\
	T' & 0 & 0\end{pmatrix},
\]
i.e., the switch connects paths $1$ and $2$ at momentum $k$ and paths $1$ and $3$ at momentum $p$.  Using equation \eq{scat_states}, we see that the states $\sket{\sc_1(k)}$, $\sket{\sc_2(k)}$, and $\sket{\sc_{2}(p)}$ have no amplitude on path 3.

Let $G'$ be the graph obtained from $G$ by removing the semi-infinite path connected to terminal 3, i.e., now we only attach semi-infinite paths to terminals $1$ and $2$. Since the states $\sket{\sc_1(k)}$, $\sket{\sc_2(k)}$, and $\sket{\sc_{2}(p)}$ have no amplitude on the removed vertices, they remain scattering eigenstates. The S-matrix of $G'$ has the form
\[
  S_{\text{R/T}}(k) = \begin{pmatrix} 0 & T\\ T & 0\end{pmatrix} \qquad S_{\text{R/T}}(p) = \begin{pmatrix} R'' & T''\\ 0 & R'\end{pmatrix},
\]
where $R''$ and $T''$ need to be determined. Unitarity implies $T''=0$ and thus
\[
  S_{\text{R/T}}(p) = \begin{pmatrix} R'' & 0\\
  	0 & R'\end{pmatrix}.
\]
Hence $G'$ is an R/T gadget with $\mathcal{D} \subseteq \mathcal{T}$ and $\mathcal{D}' \subseteq \mathcal{R}$. The same construction can be used to obtain an R/T gadget with $\mathcal{D}' \subseteq \mathcal{T}$ and $\mathcal{D} \subseteq \mathcal{R}$ (by downgrading terminal 2 instead of terminal 3).

\section{Constructing R/T gadgets}
\label{sec:reflection}

In this Section we show how to design R/T gadgets between certain sets of momenta.  We construct gadgets of the special form shown in \fig{reversal_orig}, which we call type 1 R/T gadgets.  We focus on such gadgets because their scattering properties are closely related to the eigenvectors of the subgraph $G_0$.

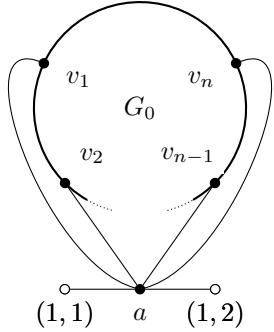
\begin{figure}
\centering
\begin{tikzpicture}[
  label distance=-5.5pt,
  thin,
  vertex/.style={circle,draw=black,fill=black,inner sep=1.25pt,
    minimum size =0mm},
  attach/.style={circle,draw=black,fill=white,inner sep=1.25pt,
    minimum size =0mm},
  dots/.style={circle,fill=black,inner sep=.5pt,
    minimum size= 0pt}]

  \draw (-1,-2.41) -- (1,-2.41);
  
  \foreach \x /\n in {-1/ 1, 1/2}{
    \node at (\x,-2.41) [attach] {};
    \node at (\x,-2.75) {$(1,\n)$};
  }
    
  \node (a) at (0,-2.41) [vertex] {};
  \node at (0,-2.75) {$a$};
  
  \foreach \x /\n in {-1/ 1, 1/2}{
    \node at (\x,-2.41) [attach] {};
    \node at (\x,-2.75) {$(1,\n)$};
  }
  
  \draw[thick] (240:1.41) arc (240:-60:1.41);
  \draw[densely dotted] (240:1.41)  arc (240:255:1.41);
  \draw[densely dotted] (285:1.41)  arc (285:300:1.41);
  
  \foreach \i / \n /\t in {25 / n/ 15, 155 / 1/ 165, 225/ 2 / 120, 315 / {n-1}/60}{
    \node at (\i : .9) [rectangle,fill=white] {$v_\n$};
    \node at (\i : 1.41) [vertex] {};
  }
  \node at (0,0) [rectangle,fill=white] {$G_0$};

  \foreach \i /\t in {25/15, 155/165}{
    \draw (\i:1.41) to[out=\i,in=\t] (a);
  }
  
  \foreach \i \in in {225,315}{
    \draw (\i:1.41) to (a);
  }

\end{tikzpicture}
\caption{A type 1 R/T gadget. In the special case where there is only one edge between $G_0$ and vertex $a$ (i.e., when $n=1$) this is also a type 2 R/T gadget.}
\label{fig:reversal_orig}
\end{figure}

We refer to the graph shown in \fig{reversal_orig} as $\hat{G}$, and we write $G$ for the full graph obtained by attaching two semi-infinite paths to terminals $(1,1)$ and $(1,2)$.  As shown in the Figure, the graph $\hat{G}$ contains a finite subgraph $G_0$ that is connected to the vertex labeled $a$ with edges to vertices in $S=\{v_1,\ldots,v_n\}$. We write $g_0$ for the induced subgraph on $V(G_0)\setminus S$. If a type 1 gadget has $|S|=1$ (as shown in \fig{reversalRT}) then we call it a type 2 gadget.

Looking at the eigenvalue equation for the scattering state $\sket{\sc_{1} (k)}$ at vertices $(1,1)$ and $(1,2)$, we see that the amplitude at vertex $a$ satisfies
\[
  \langle{a}\sket{\sc_{1} (k)} = 1 + R(k) = T(k).
\] 
Thus perfect reflection at momentum $k$ occurs if and only if $R(k)=-1$ and $\langle{a}\sket{\sc_{1} (k)}=0$, while perfect transmission occurs if and only if $T(k)=1$ and $\langle{a}\sket{\sc_{1} (k)}=1$. Using this fact, we now derive conditions on the graph $G_0$ that determine when perfect transmission and reflection occur.

For type 1 gadgets, we give a necessary and sufficient condition for perfect reflection:

\begin{lemma}\label{lem:reflect_reqs}  Let $\hat{G}$ be a type 1 R/T gadget. A momentum $k\in (-\pi,0)$ is in the reflection set $\mathcal{R}$ if and only if $G_0$ has an eigenvector $\sket{\chi_k}$ with eigenvalue $2\cos(k)$ satisfying
\begin{equation}
  \sum_{i=1}^{n} \sbraket{v_i}{\chi_k} \neq 0. \label{eq:sum_condition}
\end{equation}
\end{lemma}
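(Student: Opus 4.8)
The plan is to reduce everything to the criterion recorded just above the lemma: for a type~1 gadget, $k\in\mathcal R$ precisely when the (unique) scattering state $\sket{\sc_1(k)}$ has $\langle a\sket{\sc_1(k)}=0$ (equivalently $R(k)=-1$ and $T(k)=0$). The rest is a careful reading of the eigenvalue equation $H\sket{\sc_1(k)}=2\cos(k)\sket{\sc_1(k)}$ at the vertices of $\hat G$, carried out separately for the two implications.

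\emph{Reflection implies the eigenvector.} Suppose $k\in\mathcal R$, so $\langle a\sket{\sc_1(k)}=0$ and, by \eq{scat_states}, $\sket{\sc_1(k)}$ has amplitude $e^{-ikx}-e^{ikx}$ on path~$1$ and $0$ on path~$2$. Let $\sket{\chi_k}$ be the restriction of $\sket{\sc_1(k)}$ to $V(G_0)$. I would check that $\sket{\chi_k}$ is an eigenvector of the adjacency matrix of $G_0$ with eigenvalue $2\cos(k)$: a vertex of $V(G_0)\setminus S$ has all its neighbours (in $\hat G$) inside $G_0$, and a vertex $v_i\in S$ has exactly one extra neighbour, namely $a$, whose vanishing amplitude drops out of the eigenvalue equation — so in both cases the equation for $\sket{\sc_1(k)}$ becomes precisely the eigenvalue equation for $\sket{\chi_k}$ within $G_0$. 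To obtain \eq{sum_condition}, write the eigenvalue equation at $a$ itself, whose neighbours are the two terminals and $v_1,\dots,v_n$:
\[
  0 = 2\cos(k)\langle a\sket{\sc_1(k)} = \langle(1,1)\sket{\sc_1(k)} + \langle(1,2)\sket{\sc_1(k)} + \sum_{i=1}^n \sbraket{v_i}{\chi_k}.
\]
Since $\langle(1,1)\sket{\sc_1(k)}=e^{-ik}-e^{ik}=-2i\sin k$ and $\langle(1,2)\sket{\sc_1(k)}=0$, this yields $\sum_{i=1}^n\sbraket{v_i}{\chi_k}=2i\sin k$, which is nonzero because $\sin k\neq 0$ for $k\in(-\pi,0)$.

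\emph{The eigenvector implies reflection.} Conversely, given an eigenvector $\sket{\chi_k}$ of $G_0$ with eigenvalue $2\cos(k)$ and $\sum_{i=1}^n\sbraket{v_i}{\chi_k}\neq0$, I would build a state on $G$ by putting $e^{-ikx}-e^{ikx}$ on path~$1$, $0$ on path~$2$, $0$ at $a$, and $c\sket{\chi_k}$ on $G_0$ for a scalar $c$ to be fixed. By the same neighbourhood bookkeeping as above, the eigenvalue equation with eigenvalue $2\cos(k)$ holds automatically at every vertex except $a$, where it reads $0 = (e^{-ik}-e^{ik}) + 0 + c\sum_{i=1}^n\sbraket{v_i}{\chi_k}$; since the sum is nonzero this is solved by the unique $c = 2i\sin(k)/\sum_{i=1}^n\sbraket{v_i}{\chi_k}$. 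The resulting eigenstate has the asymptotic form \eq{scat_states} with $j=1$, $S_{1,1}(k)=-1$, $S_{2,1}(k)=0$, so by uniqueness it is $\sket{\sc_1(k)}$, and it satisfies $\langle a\sket{\sc_1(k)}=0$; hence $k\in\mathcal R$.

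There is no real obstacle here beyond bookkeeping: for each vertex one must keep straight which of its neighbours lie in $G_0$, which lie on a semi-infinite path, and whether $a$ is among them. The two load-bearing facts are that $\langle a\sket{\sc_1(k)}=0$ decouples $a$ from $G_0$ (so the restriction is a genuine $G_0$-eigenvector), and that $\sin k\neq 0$ on $(-\pi,0)$, which is exactly the nondegeneracy that makes \eq{sum_condition} equivalent to solvability for $c$. Existence and uniqueness of $\sket{\sc_1(k)}$, used implicitly throughout, come from the scattering theory reviewed in \sec{scat_theory}.
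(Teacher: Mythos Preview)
Your proof is correct and follows essentially the same approach as the paper: in both directions you use the eigenvalue equation at $a$ together with the fact that vanishing amplitude at $a$ decouples $G_0$ from the rest, and your scaling constant $c=2i\sin(k)/\sum_i\sbraket{v_i}{\chi_k}$ matches the paper's factor $(e^{ik}-e^{-ik})/c$. The only minor slip is the phrase ``by uniqueness it is $\sket{\sc_1(k)}$'': if there are confined bound states at eigenvalue $2\cos(k)$, your constructed state could differ from $\sket{\sc_1(k)}$ by such a state, but since confined bound states vanish on the paths this does not affect the asymptotic form, and your conclusion $R(k)=-1$, $T(k)=0$ still follows.
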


\begin{proof}
First suppose that $\hat{G}$ has perfect reflection at momentum $k$, i.e., $R(k)=-1$ and $\langle{a}\sket{\sc_{1} (k)}=0$. Since $\langle{(1,1)}\sket{\sc_1(k)} = e^{-ik} - e^{ik}\neq 0$ and $\langle{(1,2)}\sket{\sc_1(k)}=0$, to satisfy the eigenvalue equation at vertex $a$, we have
\[
  \sum_{j=1}^{n} \langle{v_j}\sket{\sc_1(k)} = e^{ik} - e^{-ik} \neq 0.
\]
Further, since $G_0$ only connects to vertex $a$ and the amplitude at this vertex is zero, the restriction of $\sket{\sc_1(k)}$ to $G_0$ must be an eigenvector of $G_0$ with eigenvalue $2\cos(k)$. Hence the condition is necessary for perfect reflection. 
 
Next suppose that $G_0$ has an eigenvector $\sket{\chi_k}$ with eigenvalue $2\cos(k)$ satisfying \eq{sum_condition}, with the sum equal to some nonzero constant $c$. Define a state $\sket{\psi_k}$ on the Hilbert space of the full graph $G$ with amplitudes
\[
  \langle{w} \sket{\psi_k} = \frac{e^{ik} - e^{-ik}}{c} \langle{w} \sket{\chi_k}
\]
for all $w \in V(G_0)$, $\langle a|\psi_k\rangle=0$, and 
\[
 \langle{(x,j)} \sket{\psi_k}=\begin{cases} e^{-ikx}-e^{ikx} & j=1\\
0 & j=2
\end{cases}
\]
for $x \in \posint$. One can verify that $|\psi_k\rangle$ is an eigenvector of $G$ with eigenvalue $2\cos (k)$, and takes the form of a scattering eigenstate with perfect reflection. This shows that $R(k)=-1$ and $T(k)=0$ as claimed. (If the state $\sket{\chi_k}$ satisfying the above conditions is unique then $\sket{\sc_1(k)}=|\psi_k\rangle$; otherwise they are equal for an appropriate choice of $\sket{\chi_k}$.)\end{proof}

The following Lemma gives a sufficient condition for perfect transmission (which is also necessary for type 2 gadgets).  Recall that $g_0$ is the induced subgraph on $V(G_0)\setminus S$.

\begin{lemma}\label{lem:transmit_reqs}
Let $\hat{G}$ be a type 1 R/T gadget and let $k\in (-\pi,0)$. Suppose $\sket{\xi_k}$ is an eigenvector of $g_0$ with eigenvalue $2\cos{k}$ and with the additional property that, for all $i \in [n]$,
\begin{equation}
\label{eq:trans_cond}
  \sum_{\substack{u\in V(g_0): \\ (u,v_i)\in E(G_0)}} \langle{u}\sket{\xi_k} = c \neq 0 
\end{equation}
for some constant $c$ that does not depend on $i$. Then $k$ is in the transmission set $\mathcal{T}$. If $\hat{G}$ is a type 2 R/T gadget, then this condition is also necessary.
\end{lemma}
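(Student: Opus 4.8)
The plan is to construct an explicit scattering eigenstate $\sket{\sc_1(k)}$ exhibiting perfect transmission, built from the given eigenvector $\sket{\xi_k}$ of $g_0$. Recall from the discussion preceding \lem{reflect_reqs} that perfect transmission at momentum $k$ is equivalent to $T(k)=1$ and $\langle a \sket{\sc_1(k)}=1$. So I would define a candidate state $\sket{\psi_k}$ on the full graph $G$ whose restriction to $V(g_0)$ is proportional to $\sket{\xi_k}$, whose amplitude at each $v_i \in S$ is $1$ (so that the ``incoming'' side and ``outgoing'' side match up symmetrically through the gadget), whose amplitude at $a$ is $1$, and whose amplitude on the two semi-infinite paths is the transmitting form $\langle (x,1)\sket{\psi_k} = e^{-ikx}$ and $\langle(x,2)\sket{\psi_k}=e^{ikx}$ (up to an overall phase; one should track the convention so that \eq{scat_states} holds with $S_{1,1}=0$, $S_{2,1}=1$).

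Next I would verify the eigenvalue equation $H\sket{\psi_k} = 2\cos(k)\sket{\psi_k}$ vertex by vertex. On the semi-infinite paths away from the terminals this is automatic since $e^{\pm ikx}$ are momentum states. At the terminals $(1,1)$ and $(1,2)$ the equation relates the neighbouring path amplitude and the amplitude at $a$; plugging in $\langle a\sket{\psi_k}=1$ one checks it holds. At vertex $a$ the equation reads $\sum_{j=1}^n \langle v_j\sket{\psi_k} + \langle(1,1)\sket{\psi_k} + \langle(1,2)\sket{\psi_k} = 2\cos(k)\langle a\sket{\psi_k}$; since each $\langle v_j\sket{\psi_k}=1$ the left side is $n + e^{-ik} + e^{ik} = n + 2\cos k$, which forces a compensating normalization — more precisely, I need to choose the proportionality constant between $\sket{\psi_k}|_{V(g_0)}$ and $\sket{\xi_k}$ and the amplitude on $S$ consistently, and the equation at $a$ becomes a constraint I use to fix a free scalar rather than something automatically true. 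The substantive checks are the eigenvalue equations at the vertices $v_i \in S$: at $v_i$ the equation reads $\langle a\sket{\psi_k} + \sum_{u \sim v_i,\, u\in V(g_0)}\langle u\sket{\psi_k} + (\text{contributions from other }v_j\sim v_i) = 2\cos(k)\langle v_i\sket{\psi_k}$. This is exactly where hypothesis \eq{trans_cond} enters: the sum $\sum_{u\sim v_i, u\in V(g_0)}\langle u\sket{\xi_k}$ equals the $i$-independent constant $c$, which is what makes the required amplitude at $a$ (and at each $v_i$) consistent simultaneously for all $i$. And at vertices $u \in V(g_0)$, the equation for $H$ on $G_0$ differs from the equation for $H$ on $g_0$ only by the edges to $S$; since $\langle v_i\sket{\psi_k}$ is the same constant for every $i$, these extra terms are absorbed uniformly, so $\sket{\xi_k}$ being an eigenvector of $g_0$ with eigenvalue $2\cos k$ suffices.

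For the converse (necessity when $\hat G$ is a type 2 gadget, so $|S|=1$ and $S=\{v_1\}$), suppose $k\in\mathcal T$, so $T(k)=1$, $R(k)=0$, and $\langle a\sket{\sc_1(k)}=1$. Since $n=1$, the vertex $a$ connects to $G_0$ through the single vertex $v_1$; I would read off from the eigenvalue equation at $a$ that $\langle v_1\sket{\sc_1(k)}$ equals a specific nonzero value (namely $c = 2\cos(k)\cdot 1 - e^{-ik} - e^{ik} = 0$ — wait, one must be careful here: with the transmitting boundary amplitudes $e^{-ikx}$ and $e^{ikx}$ and $|S|=1$, the equation at $a$ gives $\langle v_1\sket{\sc_1(k)} = 2\cos(k) - e^{-ik} - e^{ik} = 0$, which would be a contradiction, so the correct normalization of the path amplitudes must be chosen so the relevant sum is genuinely nonzero; this is the bookkeeping point to get right). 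Granting that $\langle v_1\sket{\sc_1(k)}$ is a nonzero constant, restrict $\sket{\sc_1(k)}$ to $V(g_0)$: the eigenvalue equation of $H$ on $G$ at vertices $u\in V(g_0)$ becomes precisely the eigenvalue equation of $g_0$ at $u$, because the only edge leaving $V(g_0)$ within $\hat G$ goes to $v_1\notin V(g_0)$, and its contribution is the constant $\langle v_1\sket{\sc_1(k)}$; wait, that constant is nonzero, so the restriction is not literally an eigenvector of $g_0$ unless the constant is $0$. The clean statement is: the restriction $\sket{\xi_k} := \sket{\sc_1(k)}|_{V(g_0)}$ satisfies the inhomogeneous equation, but the hypothesis structure of \eq{trans_cond} is what repackages this — so in the $n=1$ case the ``sum over $u\sim v_1$'' condition is directly the equation at $v_1$, and I would derive it by transposing that equation.

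The main obstacle I anticipate is getting all the sign and phase conventions consistent so that the three conditions — the eigenvalue equation at $a$, at the $v_i$'s, and on $g_0$ — mesh, in particular pinning down the exact value of the common constant $c$ in terms of $k$ and verifying it is the same nonzero number produced by both directions of the argument. Once the normalization is fixed, the forward direction is a routine vertex-by-vertex check and the type-2 converse is essentially reading the construction backwards, using that with $|S|=1$ the single equation at $v_1$ is both necessary and sufficient.
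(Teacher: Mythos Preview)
Your overall strategy (build an explicit scattering eigenstate from $\sket{\xi_k}$ for sufficiency, read the converse off the eigenvalue equation for type 2) matches the paper, but you have the key structural choice wrong: the amplitude on the vertices $v_i\in S$ should be $0$, not $1$. With $\langle v_i\sket{\psi_k}=0$ and $\langle w\sket{\psi_k}=-\tfrac{1}{c}\langle w\sket{\xi_k}$ for $w\in V(g_0)$, everything clicks: at $a$ the equation becomes $e^{-ik}+e^{ik}+0=2\cos(k)\cdot 1$; at each $v_i$ it becomes $1+(-\tfrac{1}{c})\cdot c=0=2\cos(k)\cdot 0$, which is exactly where \eq{trans_cond} is used; and at $u\in V(g_0)$ the edges to $S$ contribute nothing, so the $g_0$ eigenvalue equation is what remains. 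Your choice of $1$ on $S$ is what produced the spurious $n$ at vertex $a$ and the ``compensating normalization'' you could not pin down.

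The same misreading bites you in the converse. You correctly computed from the equation at $a$ that $\langle v_1\sket{\sc_1(k)}=2\cos(k)-e^{-ik}-e^{ik}=0$, but this is not a contradiction --- it is the point. Because the amplitude at $v_1$ vanishes, the restriction of $\sket{\sc_1(k)}$ to $V(g_0)$ \emph{is} a genuine eigenvector of $g_0$ (the edges to $v_1$ contribute zero, so there is no inhomogeneous term). The nonzero constant $c$ in \eq{trans_cond} then comes from the eigenvalue equation at $v_1$: $\langle a\sket{\sc_1(k)}+\sum_{u\sim v_1,\,u\in V(g_0)}\langle u\sket{\sc_1(k)}=2\cos(k)\cdot 0$, giving $c=-1$. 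Once you set the $S$-amplitudes to zero, both directions go through exactly as you outlined, with no bookkeeping left over.
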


\begin{proof}
If $g_0$ has a suitable eigenvector $\sket{\xi_k}$ satisfying \eq{trans_cond}, define a state $\sket{\psi_k}$ on the full graph $G$, with amplitudes $\langle a\sket{\psi_k}=1$, 
\[
  \langle w \sket{\psi_k} = \begin{cases} -\frac{1}{c} \langle{w}\sket{\xi_k} & w\in V(g_0)\\
  	0 & w\in S
\end{cases}
\]
in the graph $G_0$, and 
\[
 \langle{(x,j)} \sket{\psi_k}=\begin{cases} e^{-ikx} & j=1\\
 e^{ikx} & j=2
\end{cases}
\]
for $x \in \posint$.  Since $\sket{\xi_k}$ satisfies equation \eq{trans_cond}, the eigenvalue equation is satisfied for each vertex in $S$, so the state $\sket{\psi}$ is an eigenvector of $G$ with eigenvalue $2\cos(k)$ and perfect transmission, which shows that $T(k)=1$. 

Now suppose $\hat{G}$ is a type 2 R/T gadget, with $S = \{v\}$.  Perfect transmission along with the eigenvalue equation at vertex $a$ implies
\[
\langle{v}\sket{\sc_1(k)} = 0.
\]
Now applying the eigenvalue equation at $v$, we get
\[
  \sum_{u:(u,v)\in E(G_0)} \langle u \sket{\sc_1(k)} = -1.
\]
Hence the restriction of $\sket{\sc_1(k)}$ to $V(g_0)$ is an eigenvector of the induced subgraph, with the additional property that the sum of the amplitudes of vertices connected to $v$ is nonzero (and trivially the same for each $v\in S$).\end{proof}

\subsection{Reflection/transmission set reversal}
\label{sec:reversal}

We now show how to switch the reflection and transmission sets for a type 2 gadget. In particular, for any such gadget with transmission set $\mathcal{T}$ and reflection set $\mathcal{R}$, we construct another (type 1) gadget with transmission set $\mathcal{T}'$ and reflection set $\mathcal{R}$ such that $\mathcal{R} \subseteq \mathcal{T}'$ and $ \mathcal{T} \subseteq \mathcal{R}'$.

This new R/T gadget $\hat{G}^{\leftrightarrow}$ is depicted in \fig{reversal_changed}. It is obtained by taking two copies of the subgraph $g_0$ from \fig{reversalRT}, connecting both to the vertex $v$, and then connecting one copy of $g_0$ to the infinite path.  Note that for each vertex $w_j$ in \fig{reversalRT} there are two corresponding vertices $w^{(1)}_j,w^{(2)}_j$ in \fig{reversal_changed}.

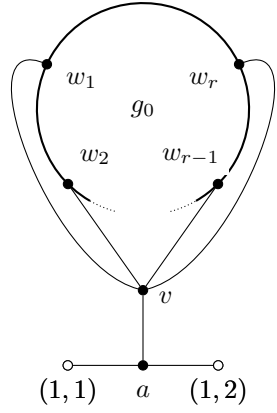
\begin{figure}
\centering
\begin{tikzpicture}[
  label distance=-5.5pt,
  thin,
  vertex/.style={circle,draw=black,fill=black,inner sep=1.25pt,
    minimum size =0mm},
  attach/.style={circle,draw=black,fill=white,inner sep=1.25pt,
    minimum size =0mm},
  dots/.style={circle,fill=black,inner sep=.5pt,
    minimum size= 0pt}]

  \draw (-1,-3.41) -- (1,-3.41);
  
  \foreach \x /\n in {-1/ 1, 1/2}{
    \node at (\x,-3.41) [attach] {};
    \node at (\x,-3.75) {$(1,\n)$};
  }
    
  \node (a) at (0,-3.41) [vertex] {};
  \node at (0,-3.75) {$a$};
  
  \foreach \x /\n in {-1/ 1, 1/2}{
    \node at (\x,-3.41) [attach] {};
    \node at (\x,-3.75) {$(1,\n)$};
  }
  
  \node (v) at (0,-2.41) [vertex]{};
  \node at (0.3,-2.51) {$v$};
  \draw (v) to (a);
  
  \draw[thick] (240:1.41) arc (240:-60:1.41);
  \draw[densely dotted] (240:1.41)  arc (240:255:1.41);
  \draw[densely dotted] (285:1.41)  arc (285:300:1.41);
  
  \foreach \i / \n /\t in {25 / r/ 15, 155 / 1/ 165, 225/ 2 / 120, 315 / {r-1}/60}{
    \node at (\i : .9) [rectangle,fill=white] {$w_\n$};
    \node at (\i : 1.41) [vertex] {};
  }
  \node at (0,0) [rectangle,fill=white] {$g_0$};

  \foreach \i /\t in {25/15, 155/165}{
    \draw (\i:1.41) to[out=\i,in=\t] (v);
  }
  
  \foreach \i \in in {225,315}{
    \draw (\i:1.41) to (v);
  }

\end{tikzpicture}
\caption{A type 2 R/T gadget, i.e., a type 1 gadget with $|S| = 1$.}
\label{fig:reversalRT}
\end{figure}

\begin{figure}
\begin{tikzpicture}[
  label distance=-5.5pt,
  thin,
  vertex/.style={circle,draw=black,fill=black,inner sep=1.25pt,
    minimum size =0mm},
  attach/.style={circle,draw=black,fill=white,inner sep=1.25pt,
    minimum size =0mm},
  dots/.style={circle,fill=black,inner sep=.5pt,
    minimum size= 0pt}]

  \draw (-1,-2.41) -- (1,-2.41);
  
  \foreach \x /\n in {-1/ 1, 1/2}{
    \node at (\x,-2.41) [attach] {};
    \node at (\x,-2.75) {$(1,\n)$};
  }
    
  \node (a) at (0,-2.41) [vertex] {};
  \node at (0,-2.75) {$a$};
  
  \foreach \x /\n in {-1/ 1, 1/2}{
    \node at (\x,-2.41) [attach] {};
    \node at (\x,-2.75) {$(1,\n)$};
  }
  
  \node (v) at (0,2.41) [vertex] {};
  \node at (0,2) {$v$};
  

\foreach \j/\of in {1/0,2/4.82}{
\begin{scope}[yshift=\of cm]
  \foreach \i / \n /\t in {25 / r/ 15, 155 / 1/ 165, 225/ 2 / 120, 315 / {r-1}/60}{
    \node at (\i : .9) [rectangle,fill=white] {$w_\n^{(\j)}$};
    \node at (\i : 1.41) [vertex] {};
  }
  \node at (0,0) [rectangle,fill=white] {$g_0^{(\j)}$};
  
  \draw[thick] (240:1.41) arc (240:-60:1.41);
  \draw[densely dotted] (240:1.41)  arc (240:255:1.41);
  \draw[densely dotted] (285:1.41)  arc (285:300:1.41);

  \foreach \i /\t in {25/15, 155/165}{
    \draw (\i:1.41) to[out=\i,in=\t] (0,-2.41);
  }
  
  \foreach \i \in in {225,315}{
    \draw (\i:1.41) to (0,-2.41);
  }
  
\end{scope}
}

  \draw (25:1.41) to [out = 115,in=-55] (v);
  \draw (155:1.41) to [out=65,in=-125] (v);
  \draw[looseness=1.5] (225:1.41) to[out=180,in=210] (v);
  \draw[looseness=1.5] (315:1.41) to[out=0,in=-30] (v);
\end{tikzpicture}
\caption{The R/T gadget reversing the reflection and transmission sets of \fig{reversalRT}.}
\label{fig:reversal_changed}
\end{figure}
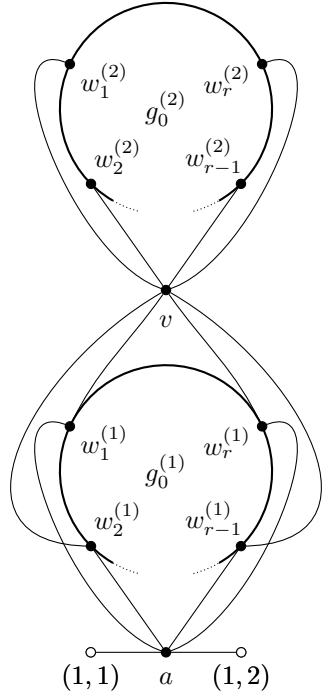

We now show that $\mathcal{R} \subseteq \mathcal{T}'$ and $ \mathcal{T} \subseteq \mathcal{R}'$.  First consider a momentum $k\in \mathcal{T}$. Using the condition derived in \lem{transmit_reqs}, we see that $g_0$ has an eigenvector $\sket{\xi_k}$ with eigenvalue $2\cos(k)$ where the sum of the amplitudes on vertices $w_1,\ldots,w_r$ is nonzero.  Now consider the induced subgraph $G_{0}^{\leftrightarrow}$ of \fig{reversal_changed} obtained by removing vertices $(1,1)$, $(1,2)$, and $a$. This subgraph has an eigenvector $|\chi^{\leftrightarrow}_k\rangle$ with eigenvalue $2\cos(k)$ given by
\[
  \langle w_j^{(i)} \sket{\chi^{\leftrightarrow}_k}= (-1)^{i} \langle{w_j}\sket{\xi_k} \quad \text{ and } \quad \langle{v}\sket{\chi^{\leftrightarrow}_k} = 0.
\]
Since $\sum_{j} \langle{w_j} \sket{\xi_k} \neq 0$, we have $\sum_{j} \langle{w^{(1)}_j} \sket{\chi^{\leftrightarrow}_k} \neq 0$, and using \lem{reflect_reqs} we see that perfect reflection occurs at momentum $k$.  Thus $\mathcal{T} \subseteq \mathcal{R}'$.

Next suppose $k\in \mathcal{R}$. \lem{reflect_reqs} states that $G_0$ has an eigenvector $|\chi_k\rangle$ with eigenvalue $2\cos(k)$ such that $\langle v \sket{\chi_k} \neq 0$.  Now consider the induced subgraph $g_0^{\leftrightarrow}$ of \fig{reversal_changed} obtained by removing vertices $(1,1)$, $(1,2)$, $a$, and $w^{(1)}_1,\ldots,w^{(1)}_n$. This graph has an eigenvector $\sket{\xi^{\leftrightarrow}_k}$ with eigenvalue $2\cos(k)$ defined by
\[
  \langle u \sket{\xi^{\leftrightarrow}_k} = \begin{cases} \langle{u}\sket{\chi_k}& \text{for }u\in g_0^{(2)}\\
    \langle{v}\sket{\chi_k} & u = v\\
0 &\text{otherwise.} \end{cases}
\]
Using this and \lem{transmit_reqs}, we see that $k\in \mathcal{T}'$, so $\mathcal{R} \subseteq \mathcal{T}'$.

\subsection{Examples}
\label{sec:examples}

\subsubsection{Paths}

As a first example, suppose $G_0$ is a finite path and $S$ is one of its vertices. For a path of length $l_1+l_2-2$ (where the length of a path is its number of edges) connected at the $l_1$th vertex as shown in \fig{RT_path}, we determine the reflection and transmission sets as a function of $l_1$ and $l_2$.

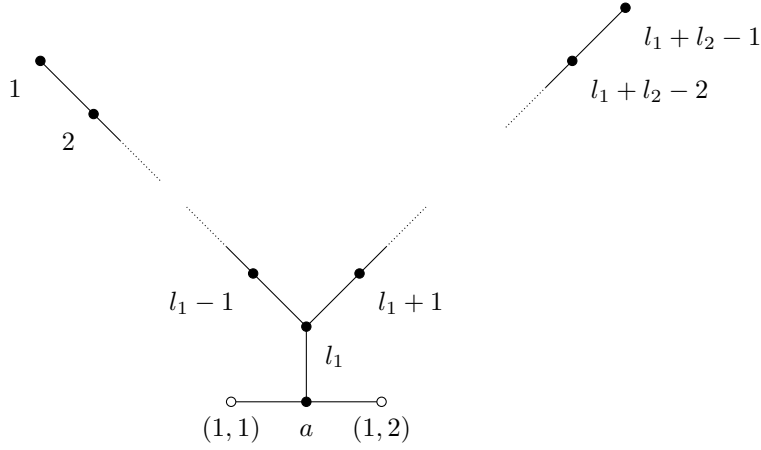
\begin{figure}
\centering
\begin{tikzpicture}[
  thin,
  vertex/.style={circle,draw=black,fill=black,inner sep=1.25pt,
    minimum size =0mm},
  attach/.style={circle,draw=black,fill=white,inner sep=1.25pt,
    minimum size =0mm},
  dots/.style={circle,fill=black,inner sep=.5pt,
    minimum size= 0pt}]
    
    \draw (-1,-1) -- (1,-1);
    \draw (0,0) -- (0,-1);
    
    \node[attach,label=270:{$(1,1)$}] at (-1,-1) {};
    \node[attach,label=270:{$(1,2)$}] at (1,-1) {};
    \node[vertex,label={[label distance=.25*\baselineskip]270:{$ a $}},] at (0,-1) {};
    \node[vertex] at (0,0) {};
    
    \draw (0,0) -- (135:1.5);
    \draw[densely dotted] (135:1.5) -- (135: 2.25);
    
    \draw[densely dotted] (135:2.75) -- (135:3.5);
    \draw (135:3.5) -- (135:5);
    
    \foreach \r in {1,5,4}{
      \node[vertex] at (135:\r) {};
    }

    \foreach \n /\p in {1/5,2/4,{l_1-1}/1}{
      \node at (135:\p) [label={225:$\n$}] {};
    }

    \draw (0,0) -- (45:1.5);
    \draw[densely dotted] (45:1.5) -- (45: 2.25);
    
    \draw[densely dotted] (45:3.75) -- (45:4.5);
    \draw (45:4.5) -- (45:6);
        
    \foreach \r in {1,6,5}{
      \node[vertex] at (45:\r) {};
    }

    \foreach \n /\p in {{l_1}/0,{l_1+1}/1,{l_1+l_2-2}/5,{l_1+l_2-1}/6}{
      \node at (45:\p)[label=315:{$\n$}]{};
    }    

\end{tikzpicture}
\caption{An R/T gadget built from a path of length $l_1+l_2-2$. }
\label{fig:RT_path}
\end{figure}

We use the fact that the path of length $L$ has eigenvectors $|\psi_j\rangle$ for $j\in [L+1]$ given by
\begin{equation}
  \langle x | \psi_j \rangle = \sin\left(\frac{ \pi j x}{L+2}\right)\label{eq:vecs_line}
\end{equation}
with eigenvalues $\lambda_j = 2 \cos(\pi j/(L+2))$. 

Perfect reflection occurs at momentum $k\in (-\pi,0)$ if and only if the path has an eigenvector with eigenvalue $2\cos(k)$ with non-zero amplitude on vertex $l_1$.  Hence
\[
  \mathcal{R}_{\mathrm{path}} = \left\{ -\frac{\pi j}{l_1 + l_2} \colon j\in [l_1 + l_2 - 1] \text{ and } \frac{jl_1}{l_1+l_2} \not\in \ZZ\right\}.
\]

To characterize the momenta at which perfect transmission occurs, consider the induced subgraph obtained by removing the $l_1$th vertex from the path of length $l_1+l_2-2$ (a path of length $l_1-2$ and a path of length $l_2-2$). We can choose bases for the eigenspaces of this induced subgraph so that each eigenvector has all of its support on one of the two paths, and has nonzero amplitude on one of the vertices $l_1-1$ or $l_1+1$. Thus $\hat{G}$ perfectly transmits for all momenta in the set
\[
  \mathcal{T}_{\mathrm{path}} = \left\{- \frac{\pi j}{l_1} \colon j\in [l_1-1]\right\} \cup \left\{-\frac{\pi j}{l_2 } \colon j \in [l_2-1]\right\}.
\]

For example, setting $l_1 = l_2 = 2$, we get $\mathcal{T}_{\mathrm{path}} = \{-\frac{\pi}{2}\}$ and $\mathcal{R}_{\mathrm{path}} = \{-\frac{\pi}{4}, -\frac{3\pi}{4}\}$.

\subsubsection{Cycles}

Suppose $G_0$ is a cycle of length $r$. Labeling the vertices by $x \in [r]$, where $x=r$ is the vertex attached to the path (as shown in \fig{RT_cycle}), the eigenvectors of the $r$-cycle are
\[
  \langle x | \phi_m\rangle = e^{{2 \pi i x m}/{r}}
\]
with eigenvalue $2 \cos(2 \pi m/r)$, where $m\in [r]$. For each momentum $k=-2 \pi m/r \in (-\pi,0)$, there is an eigenvector with nonzero amplitude on the vertex $r$ (i.e., $\langle r | \phi_m\rangle\neq 0$), so \lem{reflect_reqs} implies that perfect reflection occurs at each momentum in the set
\[
  \mathcal{R}_{\mathrm{cycle}} = \left\{ -\frac{\pi j}{r} \colon \text{$j$ is even and $j\in [r-1]$}\right\}.
\]

\begin{figure}
\centering
\begin{tikzpicture}[
  thin,
  vertex/.style={circle,draw=black,fill=black,inner sep=1.25pt,
    minimum size =0mm},
  attach/.style={circle,draw=black,fill=white,inner sep=1.25pt,
    minimum size =0mm},
  dots/.style={circle,fill=black,inner sep=.5pt,
    minimum size= 0pt}]
    \draw (-1,-2) -- (1,-2);
    \draw (0,-2) -- (0,-1);
    
    \node[attach] at (-1,-2) {};
    \node at (-1,-2.35) {$(1,1)$};
    \node[attach] at (1,-2) {};
    \node at (1,-2.35) {$(1,2)$};
    \node[vertex] at (0,-2) {};
    \node at (0,-2.35) {$a$};
    
    \foreach \t in {150, 210, 270, 330, 30}{
      \node[vertex] at (\t: 1) {};
    }
    
    \foreach \t in {210, 270, 330, 30}{
      \draw (\t:1) -- (\t - 60:1);
    }
    
    \draw (150:1) -- (135:.897);
    \draw (30:1) -- (45:.897);
    
    \draw[densely dotted] (45:.897) -- (75:.897);
    \draw[densely dotted] (135:.897) -- (105:.897);
    
    \foreach \t / \n in {210 / 1, 150 / 2, 30 / {r-2}, -30 / {r-1}}{
      \node at (\t:1.5) {$\n$};
    }
    
    \node at (.26,-1.15) {$r$};
  
\end{tikzpicture}
\caption{An R/T gadget built from an $r$-cycle.}
\label{fig:RT_cycle}
\end{figure}
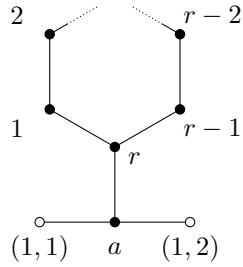

To see which momenta perfectly transmit, consider the induced subgraph obtained by removing vertex $r$. This subgraph is a path of length $r-2$ and has eigenvalues $2\cos(\pi m/r)$ for $m \in [r-1]$ as discussed in the previous section. Using the expression \eq{vecs_line} for the eigenvectors, we see that the sum of the amplitudes on the two ends is nonzero for odd values of $m$.  Perfect transmission occurs for each of the corresponding momenta:  
\[
  \mathcal{T}_{\mathrm{cycle}} = \left\{ -\frac{\pi j}{r} \colon \text{$j$ is odd and $j\in [r-1]$}\right\}.
\]

For example, the $4$-cycle (i.e., square) has $\mathcal{T}_{\mathrm{cycle}} = \{-\frac{\pi}{4},-\frac{3\pi}{4}\}$ and $\mathcal{R}_{\mathrm{cycle}} = \{-\frac{\pi}{2}\}$.  

\section{Constructing momentum switches}\label{sec:switch}

We now construct a momentum switch between the reflection and transmission sets $\mathcal{R}$ and $\mathcal{T}$ of a type 2 R/T gadget.  We attach the gadget and its reversal (defined in \sec{reversal}) to the leaves of a claw, as shown in \fig{gen_mom_con}.

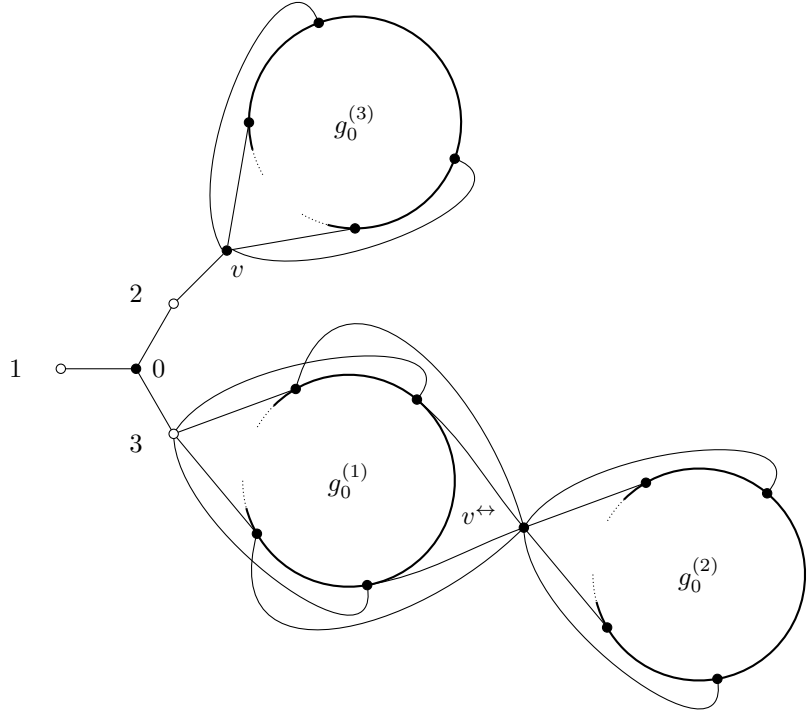
\begin{figure}
\centering
\begin{tikzpicture}[
  label distance=-5.5pt,
  thin,
  vertex/.style={circle,draw=black,fill=black,inner sep=1.25pt,
    minimum size =0mm},
  attach/.style={circle,draw=black,fill=white,inner sep=1.25pt,
    minimum size =0mm},
  dots/.style={circle,fill=black,inner sep=.5pt,
    minimum size= 0pt}]

\begin{scope}[yshift=-.866 cm,xshift=.5cm,rotate=255,yshift = 2.41cm]

  
  \node (v) at (0,2.41) [vertex] {};
  \node at (0,1.8) {$v^\leftrightarrow$};
  

\foreach \j/\of in {1/0,2/4.82}{
\begin{scope}[yshift=\of cm]
  \foreach \i / \n /\t in {25 / n/ 15, 155 / 1/ 165, 225/ 2 / 120, 315 / {n-1}/60}{

    \node at (\i : 1.41) [vertex] {};
  }
  \node at (0,0) [rectangle,fill=white] {$g_0^{(\j)}$};
  
  \draw[thick] (240:1.41) arc (240:-60:1.41);
  \draw[densely dotted] (240:1.41)  arc (240:255:1.41);
  \draw[densely dotted] (285:1.41)  arc (285:300:1.41);
  
  \foreach \i /\t in {25/15, 155/165}{
    \draw (\i:1.41) to[out=\i,in=\t] (0,-2.41);
  }
  
  \foreach \i \in in {225,315}{
    \draw (\i:1.41) to (0,-2.41);
  }
  
\end{scope}
}

  \draw (25:1.41) to [out = 115,in=-55] (0,2.41);
  \draw (155:1.41) to [out=65,in=-125] (0,2.41);
  \draw[looseness=1.5] (225:1.41) to [out=180,in=210] (0,2.41);
  \draw[looseness=1.5] (315:1.41) to [out=0,in=-30] (0,2.41);

\end{scope}

\begin{scope}[xshift = .5 cm, yshift = .866 cm, rotate = -45, yshift = 3.41 cm]
  
  \node (v) at (0,-2.41) [vertex]{};
  \node at (0.3,-2.51) {$v$};
  \draw (v) -- (0,-3.41); 
  
  \foreach \i / \n /\t in {25 / n/ 15, 155 / 1/ 165, 225/ 2 / 120, 315 / {n-1}/60}{

    \node at (\i : 1.41) [vertex] {};
  }
  \node at (0,0) [rectangle,fill=white] {$g_0^{(3)}$};
  
  \draw[thick] (240:1.41) arc (240:-60:1.41);
  \draw[densely dotted] (240:1.41)  arc (240:255:1.41);
  \draw[densely dotted] (285:1.41)  arc (285:300:1.41);

  \foreach \i /\t in {25/15, 155/165}{
    \draw (\i:1.41) to[out=\i,in=\t] (v);
  }
  
  \foreach \i \in in {225,315}{
    \draw (\i:1.41) to (v);
  }
  
\end{scope}

\foreach \t in {60, 180, 300}{
  \draw (0,0) -- (\t:1);
}

\node (0) at (0,0) [vertex] {};
\node (1) at (180:1) [attach] {};
\node (2) at (60:1) [attach]{};
\node (3) at (-60:1) [attach]{};

\node at (0:.3) {$0$};
\node at (-1.6,0) {$1$};
\node at (0,1) {$2$};
\node at (0,-1) {$3$};
\end{tikzpicture}
\caption{A momentum switch built from a type 2 R/T gadget and its reversal.}
\label{fig:gen_mom_con}
\end{figure}

Recall that for each $k\in \mathcal{T}$, the graph $g_0$ has a $2\cos(k)$-eigenvector $\sket{\xi_k}$ satisfying equation \eq{trans_cond} with some nonzero constant $c$. We define a state $\sket{\mu_k}$ on the infinite graph obtained by attaching three semi-infinite paths to the gadget shown in  \fig{gen_mom_con} and we show that it is a scattering eigenstate with perfect transmission between paths $1$ and $2$.   The amplitudes of $\sket{\mu_k}$ on the semi-infinite paths and the claw are given by
\[
  \langle (x,1)|\mu_k\rangle=e^{-ikx} \qquad 
  \langle 0|\mu_k\rangle=1 \qquad 
  \langle (x,2)|\mu_k\rangle=e^{ikx} \qquad
  \langle (x,3)|\mu_k\rangle=0.
\]
The rest of the graph consists of the three copies of the subgraph $g_0$ and the vertices $v$ and $v^{\leftrightarrow}$. The corresponding amplitudes are
\[
  \sbraket{u}{\mu_k} = \begin{cases} 	-\frac{1}{c}\sbraket{u}{\xi_k} &  u\in g_0^{(1)}\\
           \frac{1}{c}\sbraket{u}{\xi_k} &  u\in g_0^{(2)}\\
	-\frac{e^{ik}}{c} \sbraket{u}{\xi_k} & u\in g_0^{(3)}\\
  	0 & u=v \text{ or } u=v^{\leftrightarrow}.
          \end{cases}
\]

One can check that this is an eigenvector with eigenvalue $2\cos(k)$. We see that each momentum $k\in \mathcal{T}$ perfectly transmits from path 1 to path 2.

Similarly, one can construct an eigenstate with perfect transmission from path 1 to path 3 for each momentum $p\in \mathcal{R}$. This shows that the graph gadget from \fig{gen_mom_con} is a momentum switch between $\mathcal{R}$ and $\mathcal{T}$, as claimed.

Using this construction, we can obtain a momentum switch from any of the examples discussed in \sec{examples}.  For example, using the R/T gadget built from the 3-cycle, we get a momentum switch between $-\frac{\pi}{3}$ and $-\frac{2\pi}{3}$, as shown in \fig{mom_switch_ex}.  More generally, using an $r$-cycle, we obtain a switch between momenta of the form $-\frac{\pi j}{r}$ with odd or even values of $j$.  As another example, using a path of length $4$ connected at the center vertex, we obtain a switch between $-\frac{\pi}{4}$ and $-\frac{\pi}{2}$ that differs from the one shown in \fig{mswitch}.

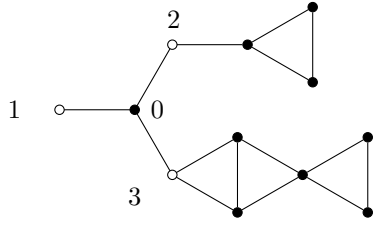
\begin{figure}
\begin{tikzpicture}[
  label distance=-5.5pt,
  thin,
  vertex/.style={circle,draw=black,fill=black,inner sep=1.25pt,
    minimum size =0mm},
  attach/.style={circle,draw=black,fill=white,inner sep=1.25pt,
    minimum size =0mm},
  dots/.style={circle,fill=black,inner sep=.5pt,
    minimum size= 0pt}]

\begin{scope}[yshift = .866cm, xshift = .5cm,rotate=0]

  \node (b) at (1,0) [vertex] {};
  \node (c) at (1.866,.5) [vertex] {};
  \node (d) at (1.866,-.5) [vertex] {};
  
  \draw (0,0) -- (b) -- (c) -- (d) -- (b);
\end{scope}

\begin{scope}[yshift = -.866cm, xshift = .5cm,rotate= -90]
  \node (e) at (.5,.866) [vertex] {};
  \node (f) at (-.5,.866) [vertex] {};
  \node (g) at (0,1.732) [vertex] {};
  \node (h) at (.5,2.598) [vertex] {};
  \node (i) at (-.5,2.598) [vertex]{};
  
  \draw (0,0) -- (e) -- (g) -- (h) -- (i) -- (g) -- (f) -- (0,0);
  \draw (e) -- (f);

\end{scope}

\foreach \t in {60, 180, 300}{
  \draw (0,0) -- (\t:1);
}

\node (0) at (0,0) [vertex] {};
\node (1) at (180:1) [attach] {};
\node (2) at (60:1) [attach]{};
\node (3) at (-60:1) [attach]{};

\node at (0:.3) {$0$};
\node at (-1.6,0) {$1$};
\node at (0.52,1.2) {$2$};
\node at (0,-1.15) {$3$};

\end{tikzpicture}
\caption{A momentum switch between $-\frac{\pi}{3}$ and $-\frac{2\pi}{3}$.}
\label{fig:mom_switch_ex}
\end{figure}

\section{Impossibility of a momentum switch between $-\frac{\pi}{4}$ and $-\frac{3\pi}{4}$}
\label{sec:impossibility}

In this Section we prove that there does not exist a momentum switch between momenta $-\frac{\pi}{4}$ and $-\frac{3\pi}{4}$. We begin by proving that there is a basis for the space of scattering states with momentum $k=-\frac{\pi}{4}$ or $k=-\frac{3\pi}{4}$ where each basis vector has entries in $\QQ(\sqrt{2})$. We then use this fact to prove that there is no R/T gadget between these two momenta. Since any momentum switch can be converted into an R/T gadget between the momenta it separated (as shown in \sec{mswitch}), this implies that no momentum switch exists between $-\frac{\pi}{4}$ and $-\frac{3\pi}{4}$.

\subsection{Basis vectors with entries in $\QQ(\sqrt{2})$}
\label{sec:vecs_over_field}

Recall the general setup shown in \fig{basic_scattering}: $N$ semi-infinite paths are attached to a finite graph $\hat G$. Consider an eigenvector $\sket{\tau_k}$ of the adjacency matrix of $G$ with eigenvalue $2\cos(k)$ for $k\in (-\pi,0)$. In general this eigenspace is spanned by scattering states with momentum $k$ and so-called confined bound states \cite{CG12} (which have zero amplitude on the semi-infinite paths). We can write the amplitudes of $\sket{\tau_k}$ on the semi-infinite paths as
\[
  \sbraket{(x,j)}{\tau_k} 
  = \kappa_j \cos(k (x-1)) + \sigma_j \sin(k (x-1))
\]
for $x \in \posint$, $j \in [N]$, and $\kappa_j,\sigma_j \in \CC$, and the amplitudes on the internal vertices as
\[
  \sbraket{w}{\tau_k} = \iota_w
\]
for $\iota_w \in \CC$, where $w$ indexes the internal vertices. Write the adjacency matrix of $\hat{G}$ as a block matrix
\[
  A(\hat{G}) = \begin{pmatrix} A & B\\ B^\dag & D\end{pmatrix}
\]
where the first block corresponds to the vertices attached to semi-infinite paths and the second block corresponds to the internal vertices.  The eigenvalue equation for $\sket{\tau_k}$ can be written 
\begin{equation*}
  \begin{pmatrix} A & B\\ B^\dag & D\end{pmatrix}
	\begin{pmatrix} \kappa \\ \iota \end{pmatrix}
	+ \cos(k) \begin{pmatrix} \kappa \\ 0 \end{pmatrix}
	+ \sin(k) \begin{pmatrix} \sigma \\ 0 \end{pmatrix} 
	= 2\cos(k) \begin{pmatrix} \kappa \\ \iota \end{pmatrix},
\end{equation*}
so the nullspace of the matrix
\[
  M= \begin{pmatrix} A -\cos(k) \II & \sin(k) \II & B\\
    0 & 0 & 0\\
    B^\dag & 0 & D-2\cos(k)\II\end{pmatrix}
\]
is in one-to-one correspondence with the $2\cos(k)$-eigenspace of the infinite matrix. Further, this matrix only has entries in $\QQ(\cos(k),\sin(k))$, so its nullspace has a basis with amplitudes in $\QQ(\cos(k),\sin(k))$, as can be seen using Gaussian elimination.

We are interested in the specific cases $2\cos(k)=\pm\sqrt{2}$ corresponding to $k = -\frac{\pi}{4}$ or $k = -\frac{3\pi}{4}$.  In these cases $\QQ(\cos(k),\sin(k)) = \QQ(\sqrt{2})$, and we may choose a basis for the nullspace of $M$ with amplitudes from $\QQ(\sqrt{2})$. Furthermore, $\cos(kx), \sin(kx) \in \QQ(\sqrt{2})$ for all $x\in \posint$, so with an appropriate choice of basis, each amplitude of $\sket{\tau_k}$ is also an element of $\QQ(\sqrt{2})$.

As noted above, the spectrum of $G$ may include confined bound states \cite{CG12} with eigenvalue $\pm\sqrt2$.  However, any such states are eigenstates of $A(\hat{G})$ subject to the additional (rational) constraints that the amplitudes on the vertices connected to the semi-infinite paths are zero.  As such, the confined bound states have a basis over $\QQ(\sqrt{2})$. Likewise there exists a basis over $\QQ(\sqrt{2})$ for the subspace of \emph{scattering states} with energy $\pm\sqrt{2}$, i.e., the $N$-dimensional space orthogonal to the confined bound states. Finally, note that for any member of this basis $\sket{\tau_k}$ there exist rational vectors $\sket{u_k},\sket{w_k}$ such that $\sket{\tau_k}=\sket{u_k} + \sqrt{2}\sket{w_k}$. Since $H^2\sket{\tau_k}=2\sket{\tau_k}$, we have $H\sket{u_k}=\pm 2\sket{w_k}$ and $H\sket{w_k} = \pm \sket{u_k}$, so
\begin{equation}
  \sket{\tau_k}=(H \pm \sqrt{2} \II)\sket{w_k}.\label{eq:form_of_tau}
\end{equation}

\subsection{No R/T gadget and hence no momentum switch}

Recall from \sec{mswitch} that a momentum switch between two momenta $k$ and $p$ can always be converted into an R/T gadget between $k$ and $p$. Here we show that if a graph perfectly reflects at momentum $-\frac{\pi}{4}$, then it must also perfectly reflect at momentum $-\frac{3\pi}{4}$. This  implies that no R/T gadget exists between these two momenta, and thus no momentum switch exists.

We use the following basic fact about two-terminal gadgets: if a state $\sket{\phi}$ within the span of the momentum-$k$ scattering states (i.e., any $2\cos(k)$-eigenstate orthogonal to the confined bound states) has zero amplitude along one of the paths, then it is a scalar multiple of one of the scattering eigenstates and the gadget perfectly reflects at momentum $k$.  This holds because if $\sket{\phi}$ has zero amplitude along, say, path $2$, then there exist some $\mu,\nu \in \CC$ such that
\[
  \sbraket{(x,2)}{\phi} 
  = \mu \sbraket{(x,2)}{\sc_2 (k)} + \nu \sbraket{(x,2)}{\sc_1(k)} 
  = \mu e^{-ikx} + \mu R e^{ikx} + \nu T e^{ikx} 
  = 0
\] 
for all $x\in \ZZ^{+}$.  Since this holds for all $x$, we have $\mu = \mu R + \nu T = 0$.  Since $\mu$ and $\nu$ cannot both be zero, we have $T=0$, and $\sket{\phi} \propto \sket{\sc_1(k)}$.

For an R/T gadget, the scattering states at a fixed momentum span a two-dimensional space. As shown in \sec{vecs_over_field}, we can expand each scattering eigenstate at momentum $k=-\frac{\pi}{4}$ in a basis with entries in $\QQ(\sqrt{2})$, where each basis vector takes the form \eq{form_of_tau}. This gives
\begin{equation*}
  \sket{\sc_{1}(-\tfrac{\pi}{4})} 
  = (H + \sqrt{2}\II) (\alpha \sket{a} + \beta \sket{b}) \label{eq:rat_expansion}
\end{equation*}
where $\alpha,\beta \in \CC$, $\alpha\neq 0$, and $\sket{a}$ and $\sket{b}$ are rational vectors.

If $T(-\frac{\pi}{4}) = 0$, then for all $x \geq 0$, 
\[
  \langle x,2\sket{\sc_1(-\tfrac{\pi}{4})} 
  = 0 
  = \langle{x,2} |(H + \sqrt{2}\II) (\alpha \sket{a} + \beta \sket{b}).
\]
Dividing through by $\alpha$ and rearranging, we get that for all $x\geq 0$,
\begin{equation*}
\frac{\beta}{\alpha} (\langle{x,2}| H\sket{b}+\sqrt{2} \langle{x,2}\sket{b})
  =-\langle{x,2}| H\sket{a}  -\sqrt{2} \langle{x,2}\sket{a}.
\label{eq:cases_eqn}
\end{equation*}
If the left-hand side is not zero, then $\beta/\alpha \in \QQ(\sqrt{2})$.  If the left-hand side is zero, then $(H+ \sqrt{2}\II)\sket{a}$ is an eigenstate at energy $2\cos(k)$ with no amplitude along path 2, so $\beta = 0$ (using the fact about two-terminal gadgets), and again $\beta/\alpha \in \QQ(\sqrt{2})$.

Now write $\beta/\alpha=r+s\sqrt{2}$ with $r,s\in\QQ$, and consider the rational vector
\[
  \sket{c} := \sket{a} + (r+sH) \sket{b}.
\]
Note that
\[
  \alpha (H + \sqrt{2} \II) \sket{c} 
= \alpha(H+ \sqrt{2} \II) \sket{a} + \alpha (rH+r\sqrt{2}+ sH^2+sH\sqrt{2}) \sket{b}.
\]
Since $\sket{b}$ is a 2-eigenvector of $H^2$ and $\beta/\alpha=r+s\sqrt{2}$, this simplifies to
\begin{equation}
  \alpha (H + \sqrt{2}\II) \sket{c} 
  = \alpha (H + \sqrt{2}\II) \sket{a} + \beta(H + \sqrt{2}\II) \sket{b} 
  = \sket{\sc_1(-\tfrac{\pi}{4})}, \label{eq:sc1_c}
\end{equation}
so $\sket{\sc_1(-\tfrac{\pi}{4})}$ can be written as $\alpha(H+\sqrt{2}\II)$ times a rational 2-eigenvector of $H^2$.

Since $\langle{x,2}\sket{\sc_1(-\tfrac{\pi}{4})} = 0$ for all $x\geq 0$ (and $\alpha\neq 0$), we have
\[
  \langle{x,2}| (H+\sqrt{2}\II) \sket{c} 
  = \langle{x,2}| H \sket{c} + \sqrt{2} \langle{x,2}\sket{c} 
  = 0.
\]
As $H$ is a rational matrix and $\sket{c}$ is a rational vector, the rational and irrational components must both be zero, implying $\langle{x,2}\sket{c}  =\langle{x,2}|H\sket{c} = 0$ for all $x\geq 0$. Furthermore, since $ \sket{\sc_1(-\tfrac{\pi}{4})}$ is a scattering state with zero amplitude on path $2$, it must have some nonzero amplitude on path 1 and thus there is some $x_0\in \mathbb{Z}^+$ for which $\langle{x_0,1}\sket{c}\neq 0$ or $\langle{x_0,1}|H\sket{c} \neq 0$.

Now consider the state obtained by replacing $\sqrt{2}$ with $-\sqrt{2}$:
\[
  \sket{\overline{\sc}_1(-\tfrac{\pi}{4})} := \alpha (H-\sqrt{2}\II) \sket{c}.
\]
This is a $-\sqrt{2}$-eigenvector of $H$, which can be confirmed using the fact that $\sket{c}$ is a $2$-eigenvector of $H^2$. As $\langle{x,2} | H \sket{c} = \langle x,2 \sket{c} = 0$ for all $x\geq 0$, $\langle x,2 \sket{\overline{\sc}_1(-\tfrac{\pi}{4})} = 0$ for all $x\geq 0$. Furthermore the amplitude at vertex $(x_0,1)$ is nonzero, i.e.,  $\langle x_0,1 \sket{\overline{\sc}_1(-\tfrac{\pi}{4})} \neq 0$, and hence $\sket{\overline{\sc}_1(-\tfrac{\pi}{4})}$ has a component orthogonal to the space of confined bound states (which have zero amplitude on both semi-infinite paths).  Hence, there exists a scattering state with eigenvalue $-\sqrt{2}$ with no amplitude on path 2.  This scattering state must be a multiple of $\sket{\sc_1(-\frac{3\pi}{4})}$, so $\sket{\sc_1(-\frac{3\pi}{4})}$ perfectly reflects (using the fact about two-terminal gadgets).  Hence, perfect reflection at momentum $-\frac{\pi}{4}$ implies perfect reflection at momentum $-\frac{3\pi}{4}$.  It follows that no perfect R/T gadget (and hence no perfect momentum switch) exists between these momenta.

This proof technique can also establish non-existence of momentum switches between other pairs of momenta $k$ and $p$.  For example, a slight modification of the above proof shows that no momentum switch exists between $k = -\frac{\pi}{6}$ and $p = -\frac{5\pi}{6}$. 

\section{An approximate switch between $-\frac{\pi}{4}$ and $-\frac{3\pi}{4}$}
\label{sec:approx_switch}

Although no perfect switch exists between momenta $-\frac{\pi}{4}$ and $-\frac{3\pi}{4}$, in this Section we construct a sequence of graphs that approximates such a switch arbitrarily well.  The switch works by splitting the wave packet into two pieces, applying a momentum-dependent relative phase, and recombining the pieces.

The first ingredient in our construction is the graph $G_\bc$ shown in \fig{basis_change}, which was used in the single- and multi-particle universality constructions to implement a basis-changing gate \cite{Chi09,CGW13}. At momenta $k=-\frac{\pi}{4}$ and $-\frac{3\pi}{4}$, the S-matrix of this graph has the form
\begin{equation*}
  S(-\tfrac{\pi}{4}) = \begin{pmatrix}
    0 & U_\bc\\
    U_\bc& 0\end{pmatrix} \qquad
 S(-\tfrac{3\pi}{4}) = \begin{pmatrix}
    0 & -U^{\ast}_\bc\\
    -U^{\ast}_\bc& 0\end{pmatrix},
\label{eq:S_basis}
\end{equation*}
where each block has size $2\times 2$ and
\begin{equation*}
  U_\bc= -\frac{1}{\sqrt{2}} \begin{pmatrix} i & 1\\ 1 & i\end{pmatrix}.
\label{eq:U_basis}
\end{equation*}

\begin{figure}
\centering
\begin{tikzpicture}
  [ scale=0.7,
    yscale=1.5,
    inner/.style={circle,draw=black!100,fill=black!100,inner sep = 1.25pt},
    attach/.style={circle,draw=black!100,fill=black!0,thin,inner sep = 1.25pt}]

  \node (1) at ( 0, 2) [inner]  {};
  \node (2) at ( 0, 0) [inner]  {};
  \node (3) at ( 1, 2) [inner]  {};
  \node (4) at ( 1, 0) [inner]  {};
  \node (5) at ( 0, 1) [inner]  {};
  \node (6) at ( 1, 1) [inner]  {};
  \node (7) at (-1, 2) [attach,label=left:{$(1,1)$}] {};
  \node (8) at (-1, 0) [attach,label=left:{$(1,2)$}] {};
  \node (9) at ( 2, 2) [attach,label=right:{$(1,3)$}] {};
  \node (0) at ( 2, 0) [attach,label=right:{$(1,4)$}] {};

  \draw (7) to (1) [thin];
  \draw (8) to (2) [thin];
  \draw (3) to (9) [thin];
  \draw (4) to (0) [thin];
  \draw (1) to (3) [thin];
  \draw (1) to (5) [thin];
  \draw (2) to (4) [thin];
  \draw (2) to (5) [thin];
  \draw (3) to (6) [thin];
  \draw (4) to (6) [thin];
\end{tikzpicture}
\caption{A graph $G_\bc$ that implements a basis-changing gate at $-\frac{\pi}{4}$ and $-\frac{3\pi}{4}$.}
\label{fig:basis_change}
\end{figure}
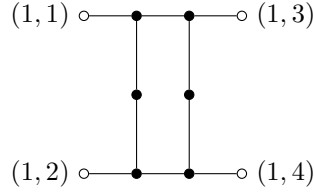 

The second ingredient, which we use to apply a momentum-dependent phase,  is the graph $G_\ph$ shown in \fig{approx_base}. This graph has perfect transmission at both momenta of interest, with transmission coefficients $T(-\tfrac{\pi}{4}) = -e^{i\phi}$ and $T(-\tfrac{3\pi}{4}) = e^{i\phi}$, where 
\begin{equation}
  e^{i\phi} =\frac{2 \sqrt{2}}{3} + \frac{i}{3} = e^{i \arctan\frac{1}{2\sqrt{2}}}.
\label{eq:phi}
\end{equation}

We construct an approximate momentum switch from $G_\ph$ and $G_\bc$ as shown in \fig{approx_switch}.  Here we use $m$ copies of $G_\ph$ for some odd $m$ that depends on the precision required in the approximation. To understand the scattering matrix of this graph, we use the following fact. Suppose graphs $G_1$ and $G_2$ each have one input terminal and one output terminal, and both have perfect transmission at some fixed momentum $k$, i.e., $|T_1(k)|=|T_2(k)|=1$. Now consider the gadget obtained by merging $G_1$ with $G_2$ by identifying the output vertex of $G_1$ with the input vertex of $G_2$ (now the input terminal is that of $G_1$ and the output terminal is that of $G_2$). Then the resulting graph has perfect transmission with transmission coefficient $e^{2ik}T_1(k)T_2(k)$. 

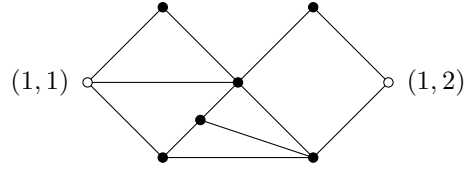
\begin{figure}
\centering
\begin{tikzpicture}[
  label distance=-5.5pt,
  thin,
  vertex/.style={circle,draw=black,fill=black,inner sep=1.25pt,
    minimum size =0mm},
  attach/.style={circle,draw=black,fill=white,inner sep=1.25pt,
    minimum size =0mm},
  dots/.style={circle,fill=black,inner sep=.5pt,
    minimum size= 0pt}]
    
\foreach \i in {1,-1}
	\draw (0,0)--(1*\i,1)--(2*\i,0)--(1*\i,-1)--(0,0);
\draw  (-2,0)--(0,0);
\draw  (-1,-1)--(1,-1);
\draw (-0.5,-0.5)--(1,-1);

\node at (0,0) [vertex] {};
\foreach \i in {1,-1}
	\foreach \j in {1,-1}
		\node at (\i,\j) [vertex]{};
\node at (-0.5,-0.5) [vertex] {};

\node at (-2,0) [attach,label={[label distance = .05 cm]left:$(1,1)$}] {};

\node at (2,0) [attach,label={[label distance = .05 cm]right:$(1,2)$}] {};

\end{tikzpicture}
\caption{A graph $G_\ph$ with perfect transmission and irrational argument at $-\frac{\pi}{4}$ and $-\frac{3\pi}{4}$.}
\label{fig:approx_base}
\end{figure}

\begin{figure}
\centering
\begin{tikzpicture}
  [ thin,
    inner/.style={circle,draw=black!100,fill=black!100,inner sep = 1.25pt},
    attach/.style={circle,draw=black!100,fill=black!0,thin,inner sep = 1.25pt},
    used/.style={circle,draw=black!100,fill=black!100,thin,inner sep = 1.25pt},
    dots/.style={circle,draw=black!100,fill=black!100,thin,inner sep = .4pt}]


\draw (2,0) -- (10,0);

\foreach \x in {5.5}{
  \node at (\x,0) [inner] {};
}


\foreach \x in {0, 10}{
\begin{scope}[xshift = \x cm]
  \node (1) at ( 0, 2) [inner]  {};
  \node (2) at ( 0, 0) [inner]  {};
  \node (3) at ( 1, 2) [inner]  {};
  \node (4) at ( 1, 0) [inner]  {};
  \node (5) at ( 0, 1) [inner]  {};
  \node (6) at ( 1, 1) [inner]  {};
  \node (a) at (-1,2) [attach] {};
  \node (c) at (2,2) [attach]{};
  \node (d) at (2,0) [attach]{};
  \node (e) at (-1,0) [attach]{};

  \draw (1) to (3);
  \draw (1) to (5);
  \draw (2) to (4);
  \draw (2) to (5);
  \draw (3) to (6);
  \draw (4) to (6);
  \draw (a) to (1);
  \draw (c) to (3);
  \draw (d) to (4);
  \draw (e) to (2);
\end{scope}}

\foreach \y in {0,2}{
\foreach \x in {2,11}{
  \node at (\x,\y) [used]{};
}}

  \foreach \x in {2,3.5,7.5}{
    \draw (\x + .75,2) [thick] circle (.75 cm);
    \node at (\x + .75,2) {$G_{\text{ph}}$};
    \node at (\x,2) [used] {};
    \node at (\x+1.5,2) [used] {};
  }
  
  \draw[thick,shift=(150:.75cm)](5.75,2) arc (150:210:.75cm);
  \draw[densely dotted] ([shift=(155:.75cm)](5.75,2) arc (155:130:.75cm);
  \draw[densely dotted] ([shift=(205:.75cm)](5.75,2) arc (205:230:.75cm);
  
  \draw[thick,shift=(30:.75cm)](6.75,2) arc (30:-30:.75cm);
  \draw[densely dotted] ([shift=(25:.75cm)](6.75,2) arc (25:50:.75cm);
  \draw[densely dotted] ([shift=(335:.75cm)](6.75,2) arc (335:310:.75cm);
  
  \foreach \x in {6.05,6.25,6.45}{
    \node[dots] at (\x,2){};
  }
  
\draw [decorate,decoration={brace,amplitude=5pt}] (2,2.9) -- (9,2.9) 
	node [midway,above=4pt] {$m$ copies};
	

\node at (-1,2) [label=below:{$(1,1)$}] {};
\node at (-1,0) [label=below:{$(1,2)$}] {};
\node at (12,2) [label=below:{$(1,3)$}] {};
\node at (12,0) [label=below:{$(1,4)$}] {};

\node at (2,2) [anchor=north east,label={$a$}] {};
\node at (9.35,2.05) [anchor=south west,label=left:{$c$}] {};
\node at (2,0) [label=below:{$b$}] {};
\node at (9,0) [inner,label=below:{$d$}] {};

\draw[loosely dashed] (1.5,-.8) -- (9.5,-.8) -- (9.5,3.75) -- (1.5,3.75) -- (1.5,-.8);

\end{tikzpicture}
\caption{An approximate momentum switch between $-\frac{\pi}{4}$ and $-\frac{3\pi}{4}$ (for suitable values of $m$).}
\label{fig:approx_switch}
\end{figure}
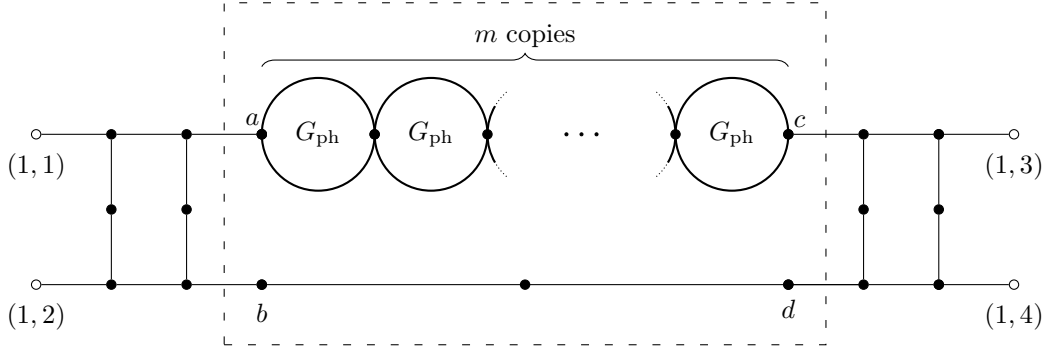

Using this fact and equation \eq{phi} we see that the graph obtained by merging $m$ copies of $G_\ph$ in this way (with $m$ odd) has transmission coefficients $T(-\tfrac{3\pi}{4})=-T(-\tfrac{\pi}{4}) = i^{(m-1)} e^{im\phi}$. Now look at the induced subgraph of \fig{approx_switch} on vertices contained within the dotted box, and consider attaching semi-infinite paths to vertices labeled $a,b,c,d$. Using the fact that the path with two edges has perfect transmission with coefficient $1$ (at any momentum), we see that the S-matrix of this gadget is
\[
\begin{pmatrix}
    0 & U_m(k)\\
    U_m(k)& 0\end{pmatrix}
\]
where 
\[
U_m(-\tfrac{\pi}{4})=\begin{pmatrix}
    - i^{(m-1)} e^{im\phi} & 0\\
    0& 1\end{pmatrix}\qquad
U_m(-\tfrac{3\pi}{4})=\begin{pmatrix}
     i^{(m-1)} e^{im\phi} & 0\\
    0& 1\end{pmatrix}.
\]

The full graph shown in \fig{approx_switch} is obtained from this subgraph by merging it with two copies of $G_{\mathrm{bc}}$ in a similar way to the merging procedure described above. Since each of the subgraphs being merged has perfect transmission from input terminals (on the left) to output terminals (on the right), their S-matrices compose in a simple way. At both momenta $k\in \{-\frac{\pi}{4},-\frac{3\pi}{4}\}$, the overall S-matrix has perfect transmission from the input paths on the left-hand side to the output paths on the right-hand side, and takes the form
\begin{equation*}
\label{eq:Smat1}
\begin{pmatrix}
    0 & V(k)\\
    V(k)& 0\end{pmatrix}
\end{equation*}
where
\begin{align*}
V(-\tfrac{\pi}{4})& =-U_\bc U_m(-\tfrac{\pi}{4})U_\bc=-\frac{1}{2} \begin{pmatrix}
    -i^{m+1}e^{im\phi}+1 & -i^{m}e^{im\phi}+i\\
    -i^{m}e^{im\phi}+i & -i^{m-1}e^{im\phi}-1 \end{pmatrix}\\
V(-\tfrac{3\pi}{4})& =-U_\bc^{\ast} U_m(-\tfrac{3\pi}{4})U_\bc^{\ast}=-\frac{1}{2} \begin{pmatrix}
    i^{m+1}e^{im\phi}+1 &- i^{m}e^{im\phi}-i\\
    -i^{m}e^{im\phi}-i & i^{m-1}e^{im\phi}-1 \end{pmatrix}
\end{align*}
assuming $m$ is odd. Since $i^{m+1}=-i^{m-1}= \pm 1$, we can see from these expressions that if either $e^{im\phi} \approx 1$ or $e^{im\phi} \approx -1$, then the graph is close to a momentum switch at these momenta. (More precisely, since a momentum switch is a three-terminal gadget and this graph has four terminals, we obtain an approximate momentum switch from this graph by downgrading the terminal vertex $(1,2)$ to an internal vertex). Since $\arctan(2^{-3/2})$ is an irrational multiple of $\pi$, the set $\{e^{2i j \phi}\colon j \in \ZZ^+\}$ is dense on the unit circle, so for any $\epsilon>0$  and choice of sign $\pm$, there exists some $j \in \ZZ^+$ such that $|e^{i(2j+1)\phi} \pm 1| = |e^{2ij\phi} \pm e^{-i\phi}| < \epsilon$. Taking $m=2j+1$ copies of $G_{\mathrm{ph}}$, this lets us approximate a momentum switch between $-\frac{\pi}{4}$ and $-\frac{3\pi}{4}$ to any desired precision.  In particular, $m=37$ gives an approximation with
\[
  \left\|V (-\tfrac{\pi}{4}) - \begin{pmatrix} -1 & 0\\ 0 & 1\end{pmatrix} \right\| \approx 0.0076 \approx \left\| V (-\tfrac{3\pi}{4}) - \begin{pmatrix} 0 & i \\ i & 0\end{pmatrix}\right\|.
\]
The next value of $m$ yielding a better approximation is $m=379$, with an error of approximately $0.0071$.

\section{Discussion}
\label{sec:conc}

In this work we have constructed momentum switches that route a quantum walker along a path that depends on its momentum.  Our results could  be used to design variants of the multi-particle quantum walk universality construction that use qubits encoded as particles with different momenta (the original construction \cite{CGW13} used momenta  $-\frac{\pi}{4}$ and $-\frac{\pi}{2}$). More broadly, we hope that tools for designing scattering gadgets will be useful for developing new quantum algorithms based on continuous-time quantum walk.

We also gave an example showing that (perfect) momentum switches cannot always be constructed.  Exact implementation of an S-matrix by scattering on an unweighted graph is analogous to exact synthesis of unitary operations using a finite set of gates \cite{KMM13,GS13}.  It might be interesting to further explore the set of S-matrices that can be realized by scattering on graphs, and perhaps to characterize the set of momentum switches that can be implemented.

Other avenues for research also remain open. Many of our results only apply to graphs in a restricted family.  In particular, our understanding of R/T gadgets is mostly limited to those of type 1 (although our result concerning non-existence of an R/T gadget between momenta $-\frac{\pi}{4}$ and $-\frac{3\pi}{4}$ is more general). It would be more satisfying to determine necessary and sufficient conditions for a graph to be an R/T gadget (or a momentum switch) without restricting the form of the gadget.

More generally, one might consider the problem of designing scattering gadgets with other restrictions on the allowed Hamiltonian.  Here we have assumed that the Hamiltonian is the adjacency matrix of a simple graph.  One might also consider, say, Laplacians of graphs.  Another natural model would allow matrices whose entries are unrestricted, but that can have at most some number of nonzero entries in each row (i.e., whose underlying graphs have bounded degree).

\section*{Acknowledgments}

This work was supported in part by NSERC; the Ontario Ministry of Research and Innovation; the Ontario Ministry of Training, Colleges, and Universities; the US ARO; and the Slovak Research and Development Agency grant APVV-0808-12 QIMABOS.


\bibliographystyle{myhamsplain}
\bibliography{momentumswitch}

\end{document}